\newcommand{\mund}{\mathunderscore}
\newcommand{\G}{\ensuremath{\mathcal{G}}}
\newcommand{\J}{\ensuremath{\mathcal{J}}}
\newcommand{\TTC}{{\text{\footnotesize TTC}}}
\newcommand{\TTCs}{{\text{\footnotesize TTC}s}}
\newcommand{\Rtuple}{{\footnotesize R}-tuple}
\newcommand{\Rtuples}{{\footnotesize R}-tuples}
\newcommand{\alga}{\texttt{{add\mund{}contact(u,v,t)}}}
\newcommand{\algb}{\texttt{can\mund{}reach($u$,$v$,$t_1$,$t_2$)}}
\newcommand{\algc}{\texttt{is\mund{}connected($t_1$,$t_2$)}}
\newcommand{\algd}{\texttt{reconstruct\mund{}journey($u$,$v$,$t_1$,$t_2$)}}
\newcommand{\alganame}{\texttt{add\mund{}contact}}
\newcommand{\algbname}{\texttt{can\mund{}reach}}
\newcommand{\algcname}{\texttt{is\mund{}connected}}
\newcommand{\algdname}{\texttt{reconstruct\mund{}journey}}
\newcommand{\Rleft}{\ensuremath{\mathcal{R}_{\text{\scriptsize left}}}}
\newcommand{\Rleftstar}{\ensuremath{\mathcal{R}^{*}_{\text{\scriptsize left}}}}
\newcommand{\Rleftuv}{\ensuremath{\mathcal{R}^{u,v}_{\text{\scriptsize left}}}}
\newcommand{\Rright}{\ensuremath{\mathcal{R}_{\text{\scriptsize right}}}}
\newcommand{\Rrightstar}{\ensuremath{\mathcal{R}^{*}_{\text{\scriptsize right}}}}
\newcommand{\Rrightuv}{\ensuremath{\mathcal{R}^{u,v}_{\text{\scriptsize right}}}}
\newcommand{\leftsubset}{\ensuremath{\subseteq_{\text{\scriptsize left}}}}
\newcommand{\rightsubset}{\ensuremath{\subseteq_{\text{\scriptsize right}}}}
\newcommand{\Btree}{B$^+$-tree}
\newcommand{\Btrees}{B$^+$-trees}
\newtheorem{definition}{Definition}
\newtheorem{theorem}{Theorem}
\newtheorem{lemma}[theorem]{Lemma}
\title{A Dynamic Data Structure for Representing Temporal Transitive Closures on Disk}
\author[1]{Luiz F. Afra Brito}
\author[1]{Marcelo Albertini}
\author[1]{Bruno A. N. Traven\c{c}olo}
\affil[1]{Federal University of Uberl\^{a}ndia, Brazil}
\begin{document}

\maketitle

\begin{abstract}
  Temporal graphs represent interactions between entities over time.
  These interactions may be direct, a contact between two vertices at some time instant, or indirect, through sequences of contacts called journeys.
  Deciding whether an entity can reach another through a journey is useful for various applications in complex networks.
  In this paper, we present a disk-based data structure that maintains temporal reachability information under the addition of new contacts in a non-chronological order.
  It represents the \emph{timed transitive closure} (\TTC) by a set of \emph{expanded} \Rtuples{} of the form $(u, v, t^-, t^+)$, which encodes the existence of journeys from vertex $u$ to vertex $v$ with departure at time $t^-$ and arrival at time $t^+$.
  Let $n$ be the number of vertices and $\tau$ be the number of timestamps in the lifetime of the temporal graph.
  Our data structure explicitly maintains this information in linear arrays using $O(n^2\tau)$ space so that sequential accesses on disk are prioritized.
  Furthermore, it adds a new unsorted contact $(u, v, t)$ accessing $O\left(\nicefrac{n^2\tau}{B}\right)$ sequential pages in the worst-case, where $B$ is the of pages on disk;
  it answers whether there is of a journey from a vertex $u$ to a vertex $v$ within a time interval $[t_1, t_2]$ accessing a single page;
  it answers whether all vertices can reach each other in $[t_1, t_2]$;
  and it reconstructs a valid journey that validates the reachability from a vertex $u$ to a vertex $v$ within $[t_1, t_1]$ accessing $O\left(\nicefrac{n\tau}{B}\right)$ pages.
  Our experiments show that our novel data structure are better that the best known approach for the majority of cases using synthetic and real world datasets.
\end{abstract}

\section{Introduction}

Temporal graphs represent interactions between entities over time.
These interactions often appear as contacts at specific timestamps.
Entities can also interact indirectly with each other by chaining several contacts.
For example, in a communication network, devices that are physically connected can send new messages or propagate received ones; thus, by first sending a new message and repeatedly propagating messages over time, remote entities can communicate indirectly.
Time-respecting paths in temporal graphs are known as temporal paths, or simply \textit{journeys}, and when a journey exists from one vertex to another one, we say that the first can \textit{reach} the second.

In a computational environment, it is often useful to check whether entities can reach each other~\cite{tang2010characterising,cacciari1996atemporal,whitbeck2012temporal,wu2017mining,williams2016spatio,bedogni2018temporal,martensen2017spatio,bryce2007atutorial}.
Beyond the sole reachability, some applications also require the ability to reconstruct a journey if one exists~\cite{wu2017mining,betsy2007spatio,zeng2014visualizing,hasan2011making,hurter2014bundled,vera2016querying}.
In standard graphs, the problem of updating reachability information is known as \emph{dynamic connectivity} and it has been extensively studied~\cite{tctech1,tctech5,2hopschema0,2hopschemadynamic3,onlinesearch0,onlinesearch1}.
In temporal graphs, fewer studies addressed this problem.
For instance, in~\cite{barjon2014testing,whitbeck2012temporal}, the authors assume that input is chronologically ordered and they give worst-case complexities.
In~\cite{wu2016reachability}, the authors assume non-chronological input, whereas their strategy is optimized for the average case.

Particularly to our interest, in~\cite{paper1}, the authors considered the \emph{only-incremental} problem, which supports only the addition of unsorted contacts.
Their data structure supports the following four operations, where, by convention, $\mathcal{G}$ is a temporal graph, $u$ and $v$ are vertices of $\mathcal{G}$, and $t, t_1$, and $t_2$ are timestamps:
\begin{enumerate}
  \item \alga, which updates information based on a contact from $u$ to $v$ at time $t$;
  \item \algb, which returns true if $u$ can reach $v$ within the interval $[t_1, t_2]$;
  \item \algc, which returns true if $\mathcal{G}$ restricted to the interval $[t_1,t_2]$ is temporally connected, \textit{i.e.}, all vertices can reach each other within the interval $[t_1, t_2]$; and
  \item \algd, which returns a journey (if one exists) from $u$ to $v$ occurring within the interval $[t_1, t_2]$.
\end{enumerate}

Their update algorithm maintains a \textit{timed transitive closure} (\TTC), a concept that generalizes the transitive closure for temporal graphs based on \textit{reachability tuples} (\Rtuples), in the form $(u, v, t^-, t^+)$, representing journeys from vertex $u$ to $v$ departing at $t^-$ and arriving at $t^+$.
Their data structure uses $O\left(n^2\tau\right)$ space while supporting \alganame, \algbname, \algcname, and \algdname, respectively, in $O\left(n^2\log\tau\right)$, $O\left(\log\tau\right)$, $O\left(n^2\log\tau\right)$, and $O\left(k\log\tau\right)$ worst-case time, where $n$ is the number of vertices of the temporal graph, $\tau$ is the number of time instances, and $k$ is the length of the resulting journey.

However, they keep their data structure in primary memory and the cost of storing and maintaining large \TTCs{} is prohibitive.
We conducted a simple experiment to show how much space is necessary  for temporal reachability.
First, we generated random temporal graphs using the Edge-Markovian Evolving Graph (EMEG) model~\cite{emeg}.
In this model, if an edge is active at time $t - 1$, then it has probability $p$ of disappearing at time $t$, otherwise, it has probability $q$ of appearing at time $t$.
We represented temporal graphs in memory using adjacency matrices storing, in each cell, timestamps at which edges are active.
Then, we built the corresponding \TTCs{} using the approach described in~\cite{paper1}.
In this experiment, we varied the number of vertices $n$ and the number of time instances $\tau$ while fixing $p = 0.1$ and $q = 0.3$.

In Table~\ref{tab:motivation2}, we see, for example, that a temporal graph with $512$ vertices and $\tau = 64$ produced by the EMEG model has $2.8$ million contacts, and we needed around $33$ MBs of space to store it in memory.
Besides, we needed around $156$ MBs of space to store the corresponding \TTC, which, in this case, it is almost five times the space needed to store the temporal graph.

\begin{table}
  \begin{center}
    \begin{tabular}{rrrrrrr}
      \toprule
      $n$ & $\tau$ & $|C|$ & $data(\G)$ & $data(\TTC)$ \\
      \midrule
      32  & 8   & 1268    & 0.02  & 0.01   \\
      32  & 16  & 2670    & 0.03  & 0.12   \\
      32  & 32  & 5249    & 0.06  & 0.24   \\
      64  & 8   & 5539    & 0.08  & 0.27   \\
      64  & 16  & 10908   & 0.14  & 0.54   \\
      64  & 32  & 21421   & 0.26  & 1.08   \\
      64  & 64  & 42671   & 0.50  & 2.17   \\
      128 & 8   & 21203   & 0.31  & 1.13   \\
      128 & 16  & 43011   & 0.55  & 2.30   \\
      128 & 32  & 86746   & 1.06  & 4.63   \\
      128 & 64  & 173479  & 2.05  & 9.31   \\
      256 & 8   & 86574   & 1.24  & 4.67   \\
      256 & 16  & 174970  & 2.25  & 9.51   \\
      256 & 32  & 346994  & 4.22  & 19.17  \\
      256 & 64  & 696436  & 8.22  & 38.54  \\
      512 & 8   & 349114  & 5.00  & 19.01  \\
      512 & 16  & 702294  & 9.04  & 38.66  \\
      512 & 32  & 1396033 & 16.98 & 78.00  \\
      512 & 64  & 2800520 & 33.05 & 156.64 \\
      \bottomrule
    \end{tabular}
  \end{center}
  \caption[Space for storing temporal graphs and their corresponding \TTCs]{Space for storing temporal graphs with $n$ vertices, $\tau$ time instances and $|C|$ contacts, and their corresponding \TTCs. Columns $data(\G)$ and $data(\TTC)$ represent, respectively, the space in megabytes of the generated temporal graphs and their \TTCs.}\label{tab:motivation2}
\end{table}

Next, we built a linear regression model with the data presented in Table~\ref{tab:motivation2} in order to extrapolate the input parameters.
Consider, for example, the scenario in which one million people use a bluetooth device that registers when and who gets close to each other and sends this information to a centralized server.
Consider also that each individual makes in average $30$ contacts per day.
In this setting, by using our model, we could check that a centralized server would require at least $100$ GBs of space in less than a year to store just the plain contacts as a temporal graph.
If one needs to support reachability queries by using a \TTC{}, it would be necessary roughly $600$ GBs of space.

Motivated by such scenarios, we investigate the problem of maintaining \TTCs{} on disk.
A simple, but not efficient, approach would be to naively implement a data structure for disk based on the approach described in~\cite{paper1}.
Briefly, their strategy maintains self-balanced binary search trees (BSTs) containing time intervals for each
pair of vertices in order to retrieve reachability information.
However, this approach does not consider data locality, thus each update operation would randomly access an excessive amount of pages on disk to retrieve information from each BST.
For instance, if we use \Btrees~\cite{bplustree} as a replacement for BSTs, their algorithm for answering \alganame{} would access $O\left(n^2\right)$ \Btrees{} and, in each \Btree, it would access $O\left(\log_B{\tau}\right)$ pages, where $B$ is the page size, resulting in $O\left(n^2\log_B{\tau}\right)$ random accesses on disk.
Therefore, we need a novel approach that better organizes data on disk.

We propose in this paper an incremental disk-based data structure that reduces the number of disk accesses for both update and query operations while prioritizing sequential accesses.
The core idea of our novel approach is to maintain explicitly an \textit{expanded} set of non-redundant \Rtuples{} containing $n^2\tau$ elements.
Conceptually, we maintain it using two $3$-dimensional arrays, $M_{out}$ and $M_{in}$, of size $n \times \tau \times n$ such that $M_{out}[u, t^-, v] = t^+$ and $M_{in}[v, t^+, u] = t^-$.
The former supports querying the earliest arrival time $t^+$ a journey departing from vertex $u$ at time $t^-$ can arrive at vertex $v$, and the latter supports querying the latest departure time $t^-$ a journey arriving to vertex $v$ at time $t^+$ can depart from vertex $u$.

Our algorithm to compute \alganame{} eagerly updates both arrays accessing $O\left(\nicefrac{n^2\tau}{B}\right)$ disk pages in the worst case.
Despite having a linear factor on $\tau$ instead of logarithmic, the expected cost of our update routine reduces considerably as we insert new contacts.
This is because journey schedules become stricter and the probability of replacing them with faster ones reduces.
Since we explicitly maintain reachability information, our algorithms to answer \algbname{}, \algcname{}, and \algdname{} access, respectively, one, $\Theta\left(\nicefrac{n^2}{B}\right)$ and $\Theta\left(\nicefrac{n}{B}\right)$ pages.

We compare our novel data structure with a na\"{i}ve adaptation of the approach introduced in~\cite{paper1} using \Btrees{} as replacement for BSTs.
Our experiments show that our novel data structure performs better on the synthetic datasets and on the majority of real-world datasets we used.
Even though the worst-case complexity of our algorithm for the \alga{u}{v}{t} operation is linear in $\tau$ instead of logarithmic, it runs much faster on average.
We attribute this behavior to the fact that as new contacts are inserted, our data structure updates on average only a few cells of both arrays $M_{out}$ and $M_{in}$.

We organized this paper as follows.
In Section~\ref{sec:definitions}, we present the definitions used throughout this paper.
In Section~\ref{sec:disk-timed-transitive-closure}, we define our expanded set of \Rtuples, introduce our new data structure to represent \TTCs{} on disk, and provide low-level primitives for manipulating them.
In Section~\ref{sec:disk-operations}, we describe our algorithms for each operation using our data structure along with their complexities in terms of number of disk accesses.
In Section~\ref{sec:disk-experiments}, we investigate the execution of our algorithms by comparing them with our implementation using \Btrees.
Finally, Section~\ref{sec:disk-conclusions} concludes with some remarks and open questions.

\section{Definitions}\label{sec:definitions}

Following the definition in~\cite{casteigts2012time}, a temporal graph is a tuple $\mathcal{G} = (V, E, \mathcal{T}, \rho, \zeta)$.
Sets $V$ and $E \subseteq{} V \times V$ represent the vertices and the edges of the underlying standard graph.
Interval $\mathcal{T} = [1, \tau] \subset \mathbb{N}$ describes the lifetime of the temporal graph.
We consider in this paper that $E$ is a set of directed edges.
Functions $\rho: E \times \mathcal{T} \to\{0, 1\}$ and $\zeta: E \times\mathcal{T} \mapsto \mathbb{N}$ are, respectively, the \emph{presence function} and the \textit{latency function}.
The presence function expresses whether an edge is present at a time instant.
We also call $(u, v, t)$ a \emph{contact} in ${\mathcal{G}}$ if $\rho((u, v), t) = 1$.
The latency function expresses the duration of an interaction for an edge at a time.
Here, we use a constant latency function, namely $\zeta = \delta$, where $\delta$ is any fixed positive integer.

We define reachability in temporal graphs in a time-respecting way, by requiring that a path travels along non-decreasing ($\delta = 0$) or increasing ($\delta \geq 1$) times.
These paths are called temporal paths or journeys interchangeably.

\begin{definition}[Journey]\label{def:journey}
  A journey from $u$ to $v$ in $\G$ is a sequence of contacts $\mathcal{J} = \langle c_1, c_2, \ldots, c_k \rangle$, whose sequence of underlying edges form a valid time-respecting path from $u$ to $v$.
  For each contact $c_i = (u_i, v_i, t_i)$, it holds that $\rho((u_i, v_i), t_i) = 1$, $v_i = u_{i + 1}$, and $t_{i+1} \ge t_i + \delta$ for $i \in [1, k-1]$.
  We say that $departure (\mathcal{J}) = t_1$, $arrival (\mathcal{J}) = t_{k} + \delta$ and $duration (\mathcal{J}) = arrival (\mathcal{J})- departure (\mathcal{J})$.
  A journey is \emph{trivial} if it comprises a single contact.
\end{definition}

\begin{definition}[Reachability]%
    \label{def:reachability}
    A vertex $u$ can \emph{reach} a vertex $v$ within time interval $[t_1, t_2]$ iff there is a journey $\J$ from $u$ to $v$ in $\mathcal{G}$ that departs at $departure(\J) \ge t_1$ and arrives at $arrival(\J) \le t_2$.
\end{definition}

Just as the number of paths in a standard graph, the number of journeys in a temporal graph could be too large to be stored explicitly (typically, factorial in $n$).
To avoid this problem, \Rtuples{} capture the fact that a vertex can reach another one within a certain time interval without storing the corresponding journeys~\cite{paper1}.

\begin{definition}[\Rtuple] A \Rtuple{} is a tuple $r=(u, v, t^-, t^+)$, where $u$ and $v$ are vertices in $\mathcal{G}$, and $t^-$ and $t^+$ are timestamps in $\mathcal{T}$. It encodes the fact that vertex $u$ can reach vertex $v$ through a journey $\mathcal{J}$ such that $departure (\mathcal{J}) = t^-$ and $arrival (\mathcal{J}) = t^+$.
  If several such journeys exist, then they are all represented by the same \Rtuple.
\end{definition}



Lastly, given a temporal graph $\mathcal{G}$, the timed transitive closure (\TTC) of $\mathcal{G}$ is a directed multigraph on the same set of vertices, whose edges correspond to the minimal, \textit{i.e.}, non-redundant, set of \Rtuples{} of $\mathcal{G}$.
The purpose of \TTCs{} is to encode reachability information among vertices, parametrized by time intervals, so that one can subsetuently decide if a new contact can be composed with existing journeys.
In paper~\cite{paper1}, the authors showed that there are $O\left(n^2\tau\right)$ non-redundant \Rtuples{} in a temporal graph $\mathcal{G}$ and it comprises those whose intervals do not include each other for the same pair of vertices, \textit{i.e.}, only information regarding the fastest journeys.

\section{Disk-Based Timed Transitive Closure}\label{sec:disk-timed-transitive-closure}

In this section, we describe our novel approach to maintain \TTCs{} in secondary memory.
First, in Section~\ref{ssec:R-tuples}, we define the concept of an \emph{expanded} set of representative \Rtuples{} and show that it has size $\Theta\left(n^2\tau\right)$.
Then, in Section~\ref{ssec:disk-datastructure}, we introduce our new data structure that uses this expanded set in order to improve the maintenance of data in non-uniform access storages and provide direct access to reachability information.

\subsection{Expanded Reachability Tuples (Expanded \Rtuples)}\label{ssec:R-tuples}

The data structure introduced in~\cite{paper1} spreads the minimal set of \Rtuples{} into multiple BSTs, each one concerning a unique pair of vertices.
The authors store these BSTs in separated regions of memory and, therefore, the organization of data is not optimal when working with storages that have non-uniform access time.

In order to mitigate this problem, we define an expanded set of \Rtuples{} $(u, v, t^-, t^+)$ that is easier to maintain sequentially, since we can use continuous arrays indexed by $t^-$ or $t^+$.
First, we define the \emph{left} and \emph{right} expansion of a single \Rtuple.

\begin{definition}[Left and right expansion]
  The left expansion of a \Rtuple{} $r = (u, v, t^-, t^+)$ is the set containing all \Rtuples{} $(u, v, t, t^+)$ for $1 \leq t \leq t^-$.
  Similarly, the right expansion of $r$ is the set containing all \Rtuples{} $(u, v, t^-, t)$ for $t^+ \leq t \leq \tau + \delta$.
\end{definition}

The \Rtuples{} produced by the left expansion of a \Rtuple{} $r$ are valid because a source vertex departing earlier can simply wait until the departure time of $r$, and take the original journey described by $r$.
Similarly, the \Rtuples{} produced by the right expansion of $r$ are valid because, after taking the original journey described by $r$, a destination vertex can simply wait until the arrival time of the new \Rtuple.

Applying both expansions to each \Rtuple{} in a set $\mathcal{R}$ and taking the union of the sets produced by the same expansion creates two separated expanded sets, the \emph{left-expanded} set \Rleft, and the \emph{right-expanded} set \Rright.
For each expanded set, we define an inclusion operator.

\begin{definition}[Left and right inclusion]
  Given any two \Rtuples{} $r_1 = (u_1, v_1, t^-_1, t^+_1)$ and $r_2 = (u_2, v_2, t^-_2, t^+_2)$ in \Rleft, $r_1 \leftsubset r_2$ if and only if $u_1 = u_2$, $v_1 = v_2$, $t^-_1 = t^-_2$, and $t^+_1 \leq t^+_2$.
  Similarly, if $r_1$ and $r_2$ are in \Rright, $r_1 \rightsubset r_2$ if and only if $u_1 = u_2$, $v_1 = v_2$, $t^-_1 \geq t^-_2$, and $t^+_1 = t^+_2$.
\end{definition}

However, \Rtuples{} produced by expansion can share redundant information.
For example, consider the \Rtuples{} $r_1 = (a, b, 2, 7)$ and $r_2 = (a, b, 2, 9)$.
Both \Rtuples{} represent journeys that departs from vertex $a$ at time $2$ and arrives at vertex $b$, one at time $7$ and the other at time $9$.
In this case, $r_2$ can be safely discarded since we can take a journey represented by $r_1$ ending at time $7$ and wait at vertex $v$ until time $9$.
Redundancy of \Rtuples{} in \Rleft{} and \Rright{} is treated differently using their corresponding inclusion operators.

\begin{definition}[Left and right redundancy]\label{def:redundancy}
  Let $r \in \Rleft$, $r$ is called left-redundant in \Rleft{} if there is $r' \in \Rleft$ such that $r' \leftsubset r$.
  Similarly, if $r \in \Rright$, $r$ is called right-redundant in \Rright{} if there is $r' \in \Rright$ such that $r' \rightsubset r$.
  A set \Rleftstar{} with no left-redundant \Rtuple{} is called left non-redundant and a set \Rrightstar{} with no right-redundant \Rtuple{} is called right no-redundant.
\end{definition}

\begin{lemma}\label{lemma:number}
  The maximum size of a left non-redundant or right non-redundant set of \Rtuples{} for \G{} is $\Theta\left(n^2\tau\right)$.
\end{lemma}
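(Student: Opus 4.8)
The plan is to prove the two directions separately: an $O\left(n^2\tau\right)$ upper bound valid for \emph{every} non-redundant set, and a matching $\Omega\left(n^2\tau\right)$ lower bound exhibited by a single explicit temporal graph. The upper bound is a counting argument indexed by the ``signature'' of a tuple, and the lower bound is a saturating construction.

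For the upper bound, I would argue by counting departure signatures $(u, v, t^-)$. Fix any left non-redundant set \Rleftstar{} and suppose it contained two distinct \Rtuples{} $r_1 = (u, v, t^-, t^+_1)$ and $r_2 = (u, v, t^-, t^+_2)$ sharing the same triple $(u, v, t^-)$. Since $r_1 \neq r_2$ we have $t^+_1 \neq t^+_2$; assuming without loss of generality $t^+_1 < t^+_2$, the definition of left inclusion gives $r_1 \leftsubset r_2$, so $r_2$ is left-redundant, contradicting non-redundancy. Hence each triple $(u, v, t^-)$ indexes at most one element of \Rleftstar. As $u$ and $v$ each range over $n$ vertices and $t^- \in [1, \tau]$ takes $\tau$ values, $|\Rleftstar| \leq n^2\tau$. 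The argument for \Rrightstar{} is symmetric, indexing instead by the arrival signature $(u, v, t^+)$ and using \rightsubset{}; here I would additionally observe that every arrival time has the form $t + \delta$ with $t \in [1, \tau]$, so $t^+$ also ranges over exactly $\tau$ distinct values, again yielding $O\left(n^2\tau\right)$.

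For the lower bound, I would exhibit one temporal graph \G{} on $n$ vertices with lifetime $[1, \tau]$ attaining this order. Take \G{} to be the complete directed graph in which every edge $(u, v)$ with $u \neq v$ is present at every timestamp $t \in [1, \tau]$. For each ordered pair $(u, v)$ and each $t^- \in [1, \tau]$, the single contact $(u, v, t^-)$ is a trivial journey with departure time $t^-$ and arrival time $t^- + \delta$, giving the \Rtuple{} $(u, v, t^-, t^- + \delta)$. These $n(n-1)\tau$ tuples carry pairwise distinct departure signatures, so no two are comparable under \leftsubset{}; moreover each realizes the minimum possible duration $\delta$, so none is dominated by any other tuple of \G{}. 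Therefore all of them persist in \Rleftstar, giving $|\Rleftstar| \geq n(n-1)\tau = \Omega\left(n^2\tau\right)$, and the same graph works for \Rrightstar{} via arrival signatures. Combining both directions yields $\Theta\left(n^2\tau\right)$.

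I expect the delicate step to be the lower bound rather than the counting: I must confirm that each constructed tuple genuinely survives left expansion followed by redundancy elimination, i.e., that distinct departure (resp.\ arrival) signatures cannot collapse and that no alternative journey in the complete graph produces an earlier arrival that would absorb one of the kept tuples. Because single-contact journeys already achieve the smallest attainable arrival $t^- + \delta$ for their departure, no such absorption can occur, which is precisely what makes the construction tight.
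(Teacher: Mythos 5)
Your proof is correct, and its upper bound is essentially the paper's own argument: the paper likewise counts, for each ordered pair $(u,v)$, at most one surviving \Rtuple{} per timestamp, obtaining $O\left(n^2\tau\right)$ overall (your wording is in fact the more precise one, since for \leftsubset{} the coordinate that forces comparability is the departure time and for \rightsubset{} the arrival time, whereas the paper's text states these the other way around). Where you genuinely diverge is the lower bound. The paper constructs nothing: it cites~\cite{paper1} for the existence of temporal graphs inducing $\Theta\left(n^2\tau\right)$ pairwise-incomparable \emph{unexpanded} \Rtuples{}, and implicitly relies on the fact that such a family still yields $\Theta\left(n^2\tau\right)$ non-redundant tuples after expansion --- a transfer step it does not spell out, since incomparability under interval containment is not literally the same relation as \leftsubset{} or \rightsubset{}. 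Your complete-temporal-graph construction makes the bound self-contained: trivial contacts realize the minimum possible duration $\delta$, so no journey can produce an earlier arrival (or later departure) that would absorb them, and hence all $n(n-1)\tau$ tuples $(u,v,t,t+\delta)$ persist in \Rleftstar{} and, symmetrically, in \Rrightstar{}. Your route costs a little extra verification but removes the dependence on external results and closes the expansion-survival gap explicitly; the paper's route is shorter because it reuses a known extremal family, at the price of leaving that last step to the reader.
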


\begin{proof}
  It suffices to prove that the maximum number of pairwise incomparable \Rtuples{} in \Rleft{} and \Rright{} is $O\left(n^2\tau\right)$, since some graphs induce $\Theta\left(n^2\tau\right)$ incomparable \Rtuples{} from unexpanded sets, see~\cite{paper1}.
  We prove that the sizes of \Rleft{} and \Rright{} are $O\left(n^2\tau\right)$ as follows.
  There are $\Theta\left(n^2\right)$ ordered pairs of vertices.
  Thus, it is enough to show that for each pair $(u, v)$, the number of incomparable \Rtuples{} in \Rleftuv{} and \Rrightuv{}, whose source vertex is $u$ and destination vertex is $v$, is $\Theta\left(\tau\right)$.
  Let \Rleftuv{} be a left non-redundant set of such \Rtuples{}, as every incomparable \Rtuple{} has different arrival timestamps, $|\Rleft| \leq \tau$.
  Similarly, let \Rrightuv{} be a right non-redundant set of such \Rtuples, as every incomparable \Rtuple{} has different departure timestamps, $|\Rright| \leq \tau$. \medskip\\
\end{proof}

\subsection{Encoding the \TTC{} on Disk}\label{ssec:disk-datastructure}

\begin{figure}
  \centering
  \begin{tabular}{cccc}
    \raisebox{0.5\height}{\includegraphics[page=4, width=0.3\textwidth]{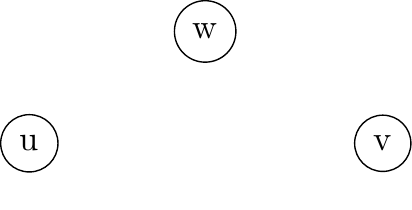}} &&
    \includegraphics[page=12, width=0.2\textwidth]{tikz} &
    \includegraphics[page=16, width=0.2\textwidth]{tikz} \\
    (a) $\mathcal{G}$ && (b) $M_{out}$  & (c) $M_{in}$  \\
  \end{tabular}
  \caption{
    Temporal graph and its associated reachability data structure. In~(a), we show a temporal graph with three vertices. Numbers on edges represent the time in which edges are active. Edges with the same color form a journey from vertex $u$ to vertex $v$. In~(b), we show the corresponding arrays $M_{out}$ and $M_{in}$ considering $\delta = 1$. Both arrays are depicted as 2-dimensional arrays by grouping their first two dimensions. For instance, $M_{out}[u, 2, w] = M_{out}[(u, 2), w] = 3$. Cells have the same color as the contacts, \textit{i.e.}, the edge at a timestamp, that originated the update. $M_{out}$ stores the minimum possible arrival timestamps to destinations and $M_{in}$ sotores the maximum possible departure timestamps from origins.
  }\label{fig:datastructure-disk-illustriation}
\end{figure}

We encode the \TTC{} using two $3$-dimensional arrays, $M_{out}[u, t^-, v] = t^+$ and $M_{in}[v, t^+, u] = t^-$, both with dimensions $n \times \tau \times n$, representing expanded sets of \Rtuples.
Each cell in $M_{out}$ represents a \Rtuple{} in $\Rleftstar$ by storing the earliest possible arrival time $t^+$ at which a vertex $u$ departing at time $t^-$ can reach a vertex $v$ through a journey.
If there is a cell $M_{out}[u, t^-, v] = t^+$, then all cells $M_{out}[u, t, v]$, for $t \in [1, t^- - 1]$ must have an arrival time $t_{\text{\scriptsize left}} \leq t^+$, since a journey from $u$ departing at a time $t < t^-$ can simply wait at vertex $u$ until time $t^-$ and then use the remaining path already described by $M_{out}[u, t^-, v] = t^+$.
Similarly, each cell in $M_{in}$ represents a \Rtuple{} in $\Rrightstar$ by storing the latest possible departure time $t^-$ at which a vertex $v$ can arrive at time $t^+$ to a vertex $u$ through a journey.
If there is a cell $M_{in}[v, t^+, u] = t^-$, then all cells $M_{in}[v, t, u]$, for $t \in [t^+ + 1, \tau + \delta]$, must have a departure time $t_{\text{\scriptsize right}} \geq t^+$, since a journey to $v$ arriving at a time $t > t^+$ can use the path already described by $M_{in}[v, t^+, u] = t^-$ and then simply wait at vertex $v$ until time $t$.
During the creation of a \TTC{}, $M_{out}$ cells are initialized with $\infty$
and $M_{in}$ cells with $-\infty$.
Figure~\ref{fig:datastructure-disk-illustriation} illustrates both $M_{out}$ and $M_{in}$.

Internally, we represent $M_{out}$ and $M_{in}$ as one-dimensional arrays using, respectively, the mapping functions $F_{out}\colon (u, t^-, v) \mapsto n (u \tau + \tau - (t^- + 1)) + v + 1$ and $F_{in}\colon (v, t^+, u) \mapsto n (v \tau + t^+ - \delta) + u + 1$.
Observing Figure~\ref{fig:datastructure-disk-illustriation}, $F_{out}$ arranges the cells of $M_{out}$ by row (left to right) and, for each source vertex, later departures come first.
$F_{in}$ also arranges $M_{in}$ by row but, in contrast, for each destination vertex, earlier arrivals come first.
By subtracting $\delta$ from $t^+$ in $F_{in}$, we ensure all $t^+$ values fit in $M_{in}$.
Thus, reading sequentially the range $[F_{out}(u, t^-, 1), F_{out}(u, t^-, n)]$ from $M_{out}$ gives direct access to the earliest arrival times to reach all vertices when departing from $u$ at time $t^-$.
Similarly, reading sequentially the range $[F_{in}(v, t^+, 1), F_{in}(v, t^+, n)]$ from $M_{in}$ gives direct access to the latest departure times to leave all vertices when arriving at $v$ at time $t^+$.

Finally, assuming a general function $F$ that maps to $F_{out}$, whether accessing $M_{out}$, or $F_{in}$, whether accessing $M_{in}$, we provide the following low-level operations for manipulating our data structures on disk:
\begin{enumerate}
  \item \Call{read\mund{}cell}{$M, w_1, t, w_2$}, which returns the value of $M$ ($M_{out}$ or $M_{in}$) at position $F(w_1, t, w_2)$;
  \item \Call{write\mund{}cell}{$M, w_1, t_1, w_2, t_2$}, which replaces the value of $M$ at position $F(w_1, t_1, w_2)$ with $t_2$;
  \item \Call{read\mund{}adjacency}{$M, w, t$}, which returns a list containing the values of $M$ in the interval $[F(w, t, 1), F(w, t, n)]$, \textit{i.e.}, the minimum possible timestamps to arrive at any vertex while departing from $w$ at timestamp $t$;
  \item \Call{write\mund{}adjacency}{$M, w, t, L$}, which replaces the values of $M$ values in the interval $[F(w, t, 1), F(w, t, n)]$ with the values of the list $L$, \textit{i.e.}, the maximum possible timestamps to depart from any vertex while arriving at $w$ at timestamp $t$.
\end{enumerate}

Operations (1) and (2) access $O\left(1\right)$ pages on disk, while operations (3) and (4) access  $O\left(\nicefrac{n}{B}\right)$ pages, where $B$ is the page size.

\section{\TTC{} Operations}\label{sec:disk-operations}

n this section, we describe algorithms for the operations described in~\cite{paper1}: the update operation \alga{}; the query operations \algb{} and \algc{}; and the reconstruction operation \algd.
In Section~\ref{subsec:alg1}, we present our algorithm for \alga{} that receives a contact and adds to our data structure  the reachability information related to the new available journeys passing thought it.
In Section~\ref{subsec:alg23}, we breafly describe algorithms for \algb{} and \algc{} since, as reachability information can be directly accessed, they are straightforward.
Finally, in Section~\ref{subsec:alg4}, we detail our algorithm for \algd{} that reconstructs a valid journey by concatenating one contact at a time.

\subsection{Update operation}\label{subsec:alg1}

An algorithm to perform \alga{} must first add the reachability information regarding the new trivial journey $\J_{triv}$ from vertex $u$ to vertex $v$ departing at time $t$ and arriving at time $t + \delta$.
Next, for all vertices $w^+$ that $v$ can reach when departing at a time later than or exactly $t + \delta$, the algorithm updates the reachability information from $u$ to $w^+$ whether the new available journey passing through $\J_{triv}$ has earlier arrival time.
Then, for all vertices $w^-$ that can reach $u$ when arriving at a time earlier than or exactly $t$, the algorithm updates the reachability information from $w^-$ to $v$ whether the new available journey passing through $\J_{triv}$ has later departure time.
Finally, the algorithm must consider all new available journeys from vertices $w^-$ to vertices $w^+$ that pass through $\J_{triv}$ and update the current reachability information if necessary.

Algorithm~\ref{alg:1-disk} describes the maintenance of both arrays $M_{out}$ and $M_{in}$ when inserting a new contact.
In line 1, the algorithm checks if the structure already has the information of the new contact $(u, v, t)$.
If it still has not, in line 2, it retrieves the latest departure timestamps of journeys departing from vertices $w^-$ and arriving at vertex $u$ at timestamp $t$ as an array $T^-$.
In line 3, the algorithm retrieves the earliest arrival timestamps of journeys departing from vertex $v$ at timestamp $t + \delta$ and arriving at vertices $w^+$ as an array $T^+$.
In lines 4 and 5, it sets the reachability information about the new trivial journey $\J_{triv} = (u, v, t)$ that departs at timestamp $t$ and arrives at $t + \delta$.
From line 6 to 14, the algorithm eagerly updates all cells $M_{out}[w^-, t', w^+] = t^+$ for $t^- \geq t' \geq 1$.
In this part, the algorithm proceeds by first iterating through all vertices $w^-$, \textit{i.e.}, those that reached $u$ before than or exactly at timestamp $t$, and retrieving their departure timestamps $t^-$.
Then, it progressively retrieves the current arrival timestamps to reach vertices $w^+$ when departing at timestamp $t'$, by reading the range $[F_{out}(w^-, t', 1), F_{out}(w^-, t', n)]$, and updates it whether the new journeys passing through $\J_{triv}$ have earlier arrival timestamps.
Note that, vertices that could not reach $u$ before than or exactly at timestamp $t$ have their arrival time equals to $-\infty$; therefore, they are not considered in the while loop starting at line 8.
This process continues until the current reachability information in the whole range does not change or $t' < 1$.
Similarly, from line 6 to 14, the algorithm eagerly updates all cells $M_{in}[w^+, t'', w^-] = t^-$ for $t^+ \leq t'' \leq \tau + \delta$.
The algorithm proceeds by first iterating through all vertices $w^+$, \textit{i.e.}, those that $v$ can reach departing after or exactly at timestamp $t + \delta$, and retrieving their arrival timestamps $t^+$.
Then, it progressively retrieves the current departure timestamps in which vertices $w^-$ departs present in range $[F_{in}(w^+, t'', 1), F_{in}(w^+, t'', n)]$ and updates it whether the new available journeys passing through $\J_{triv}$ have later departure timestamps.
This process continues until the current reachability information in the range does not change or $t'' > \tau + \delta$.
Figure~\ref{fig:datastructure-disk-example} illustrates the addition of new contacts to a temporal graph along with the maintenance of the arrays $M_{out}$ and $M_{in}$.

\begin{algorithm*}
  \caption{\alga}\label{alg:1-disk}
  \begin{algorithmic}[1]
    \Require{
      $u,v \in V$ with $u \neq{} v$,
      $n = |V|$,
      $t \in \mathcal{T}$,
      $\tau$,
      $\delta$,
      $M_{out}$,
      $M_{in}$
    }
    \If{$\Call{read\mund{}cell}{M_{out}, u, t, v} \neq t + \delta$}
    \Comment{check whether $(u, v, t)$ was inserted}
      \State{$T^- \gets \Call{read\mund{}adjacency}{M_{in}, u, t}$}
      \State{$T^+ \gets \Call{read\mund{}adjacency}{M_{out}, v, t + \delta}$}
      \State{$T^-[u] \gets t$}
      \Comment{add the new trivial journey information}
      \State{$T^+[v] \gets t + \delta$}
      \For{$w^- \text{ from } 1 \text{ up to } n$}
      \Comment{will update $M_{out}$ with new journeys from $w^-$}
        \State{$t' \gets T^-[w^-]$}
        \While{$t' \neq -\infty$ and $t' \geq 1$}
        \Comment{loop for $t^- \geq t' \geq 1$}
          \State{$T^+_{cur} \gets \Call{read\mund{}adjacency}{M_{out}, w^-, t'}$}
          \State{$T^+_{cur}[w^+] \gets \min(T^+_{cur}[w^+], T^+[w^+])$ for $w^+ \in [1, n]$}
          \If{$T^+_{cur}$ has not changed}
            \State{\textbf{break}}
          \EndIf{}
          \State{$\Call{write\mund{}adjacency}{M_{out}, w^-, t', T^+_{cur}}$}
          \State{$t' \gets t' - 1$}
        \EndWhile{}
      \EndFor{}

      \For{$w^+ \text{ from } 1 \text{ up to } n$}
      \Comment{will update $M_{in}$ with new journeys to $w^+$}
        \State{$t'' \gets T^+[w^+]$}
        \While{$t'' \neq \infty$ and $t'' \leq \tau + \delta$}
        \Comment{loop for $t^+ \leq t'' \leq \tau + \delta$}
          \State{$T^-_{cur} \gets \Call{read\mund{}adjacency}{M_{in}, w^+, t''}$}
          \State{$T^-_{cur}[w^-] \gets \max(T^-_{cur}[w^-], T^-[w^-])$ for $w^- \in [1, n]$}
          \If{$T^-_{cur}$ has not changed}
            \State{\textbf{break}}
          \EndIf{}
          \State{$\Call{write\mund{}adjacency}{M_{in}, w^+, t'', T^-_{cur}}$}
          \State{$t'' \gets t'' + 1$}
        \EndWhile{}
      \EndFor{}
    \EndIf{}
  \end{algorithmic}
\end{algorithm*}

\begin{figure}
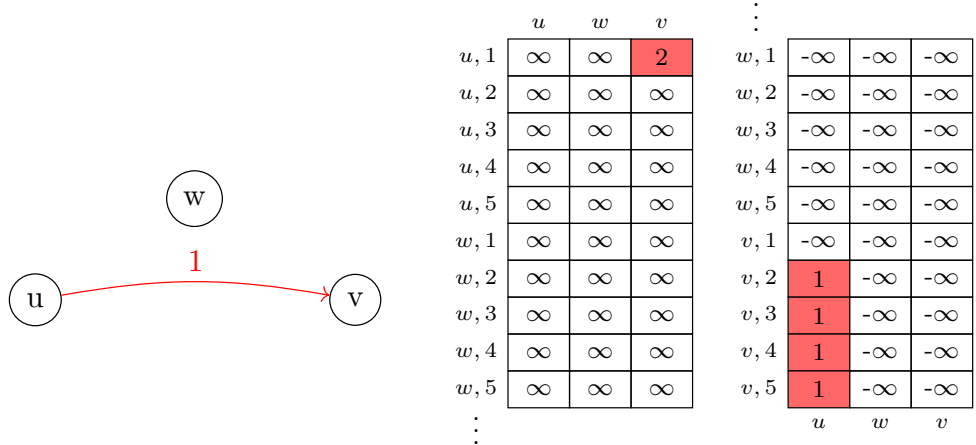
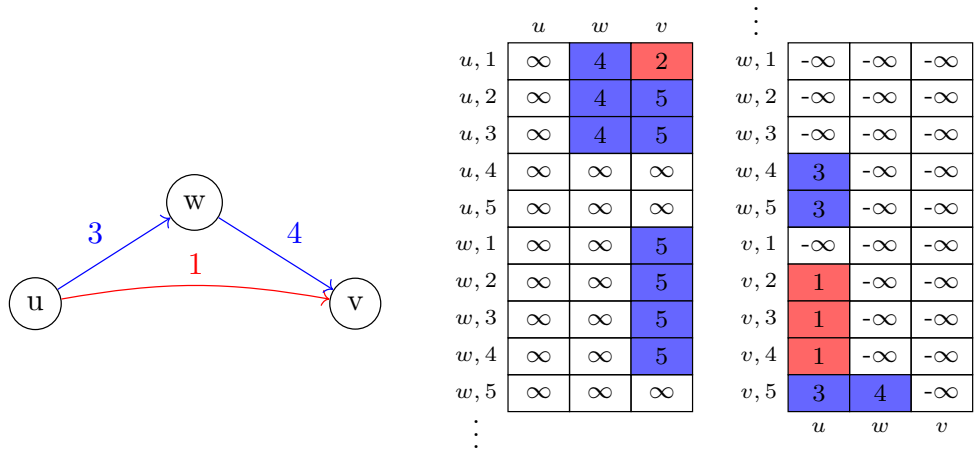
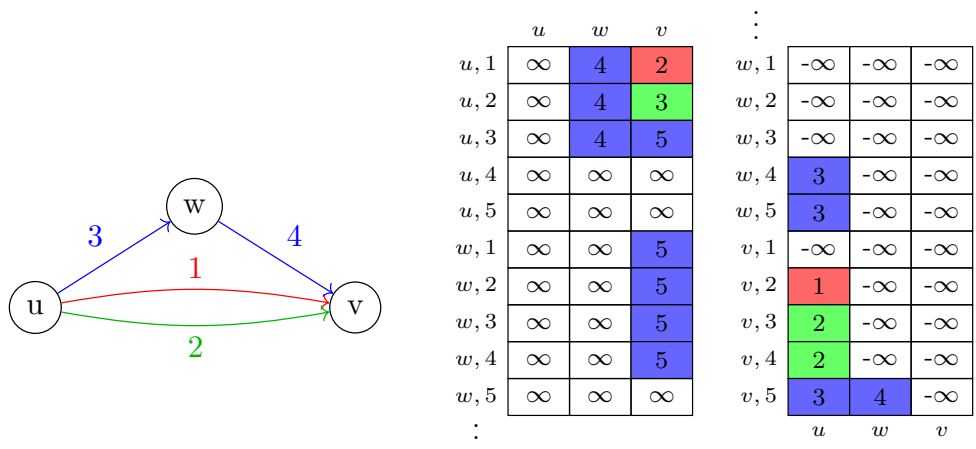

  \centering
  \begin{tabular}{cccc}
    \raisebox{0.5\height}{\includegraphics[page=2, width=0.3\textwidth]{tikz}} &&
    \includegraphics[page=10, width=0.2\textwidth]{tikz} &
    \includegraphics[page=14, width=0.2\textwidth]{tikz} \\
    \multicolumn{4}{c}{(a) added contact $(u, v, 1)$} \\
  \end{tabular}
  \begin{tabular}{cccc}
    \raisebox{0.5\height}{\includegraphics[page=3, width=0.3\textwidth]{tikz}} &&
    \includegraphics[page=11, width=0.2\textwidth]{tikz} &
    \includegraphics[page=15, width=0.2\textwidth]{tikz} \\
    \multicolumn{4}{c}{(b) added contacts $(u, w, 3)$ and $(w, v, 4)$} \\
  \end{tabular}
  \begin{tabular}{cccc}
    \raisebox{0.5\height}{\includegraphics[page=4, width=0.3\textwidth]{tikz}} &&
    \includegraphics[page=12, width=0.2\textwidth]{tikz} &
    \includegraphics[page=16, width=0.2\textwidth]{tikz} \\
    \multicolumn{4}{c}{(c) added contact $(u, v, 2)$} \\
  \end{tabular}
  \caption{
    Maintenance of our disk-based \TTC{}, encoded as two arrays, $M_{out}$ (left table) and $M_{in}$ (right table), in different scenarios.
    Both arrays are depicted as 2-dimensional arrays by grouping their first two dimensions.
    In~(a), the contact $(u, v, 1)$ is inserted in the temporal graph and thus $M_{out}$ and $M_{in}$ are updated using the information present in the left and right expansions of the \Rtuple{} $(u, v, 1, 2)$.
    In~(b), both contacts $(u, w, 3)$ and $(w, v, 4)$ are inserted and, additionally, a non-trivial journey from $u$ to $v$ becomes possible.
    Finally, in~(c), the contact $(u, v, 2)$ is inserted, allowing a faster journey departing from $u$ at time $2$ and triggering the update of some cells of $M_{out}$ and $M_{in}$.
  }\label{fig:datastructure-disk-example}
\end{figure}

\begin{theorem}\label{the:cost-algb-disk}
  Algorithm~\ref{alg:1-disk} access $O\left(\nicefrac{n^2\tau}{B}\right)$ pages on disk.
\end{theorem}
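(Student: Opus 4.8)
The plan is to bound the total number of page accesses by walking through the three structural parts of Algorithm~\ref{alg:1-disk}---the setup (lines 1--5), the first \textbf{for} loop that updates $M_{out}$ (lines 6--14), and the symmetric second \textbf{for} loop that updates $M_{in}$---and to charge each invocation of a low-level primitive its stated cost from Section~\ref{ssec:disk-datastructure}: \texttt{read\_cell} and \texttt{write\_cell} access $O(1)$ pages, while \texttt{read\_adjacency} and \texttt{write\_adjacency} access $O(\nicefrac{n}{B})$ pages each. Because the statement is a worst-case bound, I would deliberately discard the early-termination \textbf{break} at lines 11--12 (and its counterpart in the second loop), which only helps the average case; the worst case must assume that every \textbf{while} iteration runs to completion.

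For the setup, line 1 issues one \texttt{read\_cell} at cost $O(1)$, and lines 2--3 issue two \texttt{read\_adjacency} calls at cost $O(\nicefrac{n}{B})$ each, while lines 4--5 are purely in-memory assignments. Hence the setup contributes $O(\nicefrac{n}{B})$ page accesses, which is dominated by the loops.

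For the first \textbf{for} loop, the key observation is that the inner \textbf{while} loop iterates at most $\tau$ times: the counter $t'$ is initialized to $T^-[w^-] \le \tau$, strictly decreases by $1$ each iteration, and the loop guard requires $t' \ge 1$, so no value of $t'$ is revisited. Within a single \textbf{while} iteration the algorithm performs one \texttt{read\_adjacency} (line 9) and at most one \texttt{write\_adjacency} (line 13), costing $O(\nicefrac{n}{B})$ pages; the coordinate-wise $\min$ at line 10 and the change test are in memory. Since the outer loop runs $n$ times over $w^-$, the loop performs at most $n\tau$ \textbf{while} iterations in total, yielding $n \cdot \tau \cdot O(\nicefrac{n}{B}) = O(\nicefrac{n^2\tau}{B})$ page accesses. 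The second \textbf{for} loop is treated identically: $t''$ starts at $T^+[w^+]$ and increases monotonically by $1$ toward $\tau + \delta$, so its \textbf{while} loop runs at most $\tau + \delta = O(\tau)$ times (with $\delta$ a fixed constant), each iteration again costing $O(\nicefrac{n}{B})$, for a total of $O(\nicefrac{n^2\tau}{B})$.

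Summing the three parts gives $O(\nicefrac{n}{B}) + O(\nicefrac{n^2\tau}{B}) + O(\nicefrac{n^2\tau}{B}) = O(\nicefrac{n^2\tau}{B})$, which proves the claim. The only real subtlety---and the step I would be most careful to justify rather than assert---is the $\tau$ bound on the \textbf{while} loops: I must argue that the counters $t'$ and $t''$ remain within $[1,\tau]$ and $[1,\tau+\delta]$ respectively and move by $\pm 1$ without repetition, so that each can take at most $O(\tau)$ distinct values. Everything else is a routine charging argument against the per-primitive page costs.
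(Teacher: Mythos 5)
Your proof is correct and follows essentially the same approach as the paper's: charge each primitive its stated page cost, bound the nested loops by $O(n\tau)$ iterations each performing one \textproc{read\_adjacency} and at most one \textproc{write\_adjacency} at $O\left(\nicefrac{n}{B}\right)$ pages, and observe the second loop is symmetric. Your treatment is in fact slightly more careful than the paper's, since you explicitly justify the $O(\tau)$ bound on the \textbf{while} counters and note that the worst case ignores the \textbf{break} early exits, which the paper leaves implicit.
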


\begin{proof}
  The \textproc{read\mund{}cell} operation in line~1 access a single page.
  The two \textproc{read\mund{}adjacency} operations in lines~2 and~3 access  $O\left(\nicefrac{n}{B}\right)$ sequential pages each.
  In lines 4 and 5, the algorithm writes the reachability of the new trivial journey $J_{triv} = \{(u, v, t)\}$ in main memory.
  The for loop starting at line 6 iterates over $n$ vertices $w^-$ and, the while loop starting at line 8 iterates through $O(\tau)$ timestamps $t'$.
  At each of the $O(n\tau)$ iterations, it calls \textproc{read\mund{}adjacency} in order to read $n$ cells, and then (possibly) calls \textproc{write\mund{}adjacency} to write the $n$ cells back while accessing, in each operation, $O\left(\nicefrac{n}{B}\right)$ sequential pages.
  Due to our mapping function $F_{out}$, at every timestamp $t'$, the algorithm will read a page that is arranged sequentially on disk.
  The loop from line 15 to 23 does a similar computation.
\end{proof}

\subsection{Reachability and Connectivity Queries}\label{subsec:alg23}

Both algorithms for \algb{} and \algc{} are straightforward.
The algorithm to perform \algb{} comprises testing whether $\Call{read\mund{}cell}{M_{out}, u, t_1, v} \leq{} t_2$ while accessing only a single page from disk.
The algorithm to perform \algc{}, for each origin vertex $u \in V$, calls $tmp \gets \Call{read\mund{}adjacency}{M_{out}, u, t_1}$ and then for each destination vertex $v \in V$, it checks whether $tmp[v] \leq{} t_2$.
As soon as a check is negative, the answer is false; otherwise, it is true.
Therefore, while the algorithm reads all cells, it accesses $O\left(\nicefrac{n^2}{B}\right)$ pages on disk.

\subsection{Journey Reconstruction}\label{subsec:alg4}

For the \algd{} query, we need to augment each cell of $M_{in}$ with the first successor vertex of the corresponding journeys.
Algorithm~\ref{alg:1-disk} can be trivially modified to include this information.
For instance, the successor vertex of a trivial journey from a contact $(u, v, t)$ is the vertex $v$ since it is the first successor of $u$.
Thus, while composing previous \Rtuples{} in our update algorithm, one would need to read previous reachability information and compose than appropriately considering also the successor vertex present in each cell of $M_{in}$.

Algorithm~\ref{alg:2-disk} gives the details to process the \algd{} query.
Its goal algorithm is to reconstruct a journey by unfolding the intervals and successor fields.
In line 1, it initializes an empty journey $\J$.
In line 2, it retrieves the earliest timestamp $t^+$ a journey from vertex $u$ departing at timestamp $t_1$ can arrive at vertex $v$ by reading on disk the entry $M_{out}[u, t_1, v]$.
If $t^+ \leq{} t_2$, it starts reconstructing the resulting journey, otherwise, it returns an empty journey since there is no journey completely in the interval $[t_1, t_2]$.
From lines 4 to 10, it reconstructs the resulting journey by:
first, in lines~4 and~5, initializing the successor vertex $succ$ to $u$, and accessing on disk all the entries $M_{in}[v, t^+, w]$ for $w \in V$;
then, from lines~6 to~10, the joursey s reconstructed by iteratively accessing the next earliest departing timestamp $t^-$ and the corresponding successor vertex $next\mund{}succ$ that reaches $v$ at timestamp $t^+$ while concatenating the contact $(succ, next\mund{}succ, t^-)$ at the end of $\mathcal{J}$ and updating the current successor vertex.

\begin{algorithm*}
  \caption{\algd}\label{alg:2-disk}
  \begin{algorithmic}[1]
    \Require{$[t_1, t_2] \subset{\mathcal{T}}, u,v \in V$ with $u \neq v$}
    \State{$\mathcal{J} \gets \{\}$}

    \State{$(t^+, \_) \gets \Call{read\mund{}cell}{M_{out}, u, t_1, v}$}
    \If{$t^+ \leq{} t_2$}
      \State{$succ \gets u$}
      \State{$in \gets \Call{read\mund{}adjacency}{M_{in}, v, t^+}$}
      \While{$succ \neq v$}
        \State{$t^- \gets in[succ].t$}
        \State{$next\mund{}succ \gets in[succ].succ$}
        \State{$\mathcal{J} \gets \mathcal{J} \cup (succ, next\mund{}succ, t^-)$}
        \State{$succ \gets next\mund{}succ$}
      \EndWhile{}
    \EndIf{}
    \State{\textbf{return} $\mathcal{J}$}
  \end{algorithmic}
\end{algorithm*}

\begin{theorem}
  Algorithm~\ref{alg:2-disk} sequentially accesses $O\left(\nicefrac{n}{B}\right)$ pages on disk, where $n$ is the number of vertices and $B$ is the page size.
\end{theorem}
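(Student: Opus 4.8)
The plan is to bound the number of disk pages touched by Algorithm~\ref{alg:2-disk} by accounting for each disk-accessing primitive separately and then arguing that the reconstruction loop performs no disk access at all. The algorithm issues exactly one call to \textproc{read\mund{}cell} (line~2) and exactly one call to \textproc{read\mund{}adjacency} (line~5); every other line manipulates only the journey $\mathcal{J}$ and local scalars. Since, by the discussion in Section~\ref{ssec:disk-datastructure}, \textproc{read\mund{}cell} touches $O(1)$ pages and \textproc{read\mund{}adjacency} touches $O\left(\nicefrac{n}{B}\right)$ sequential pages, establishing that the \textbf{while} loop is disk-free immediately yields the claimed $O\left(\nicefrac{n}{B}\right)$ bound.

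First I would account for the two reads. The call in line~2 reads the single cell at position $F_{out}(u, t_1, v)$, costing one page. The call in line~5 reads the contiguous range $[F_{in}(v, t^+, 1), F_{in}(v, t^+, n)]$; because $F_{in}$ lays these $n$ cells out consecutively on disk, this is a single sequential scan of $O\left(\nicefrac{n}{B}\right)$ pages. Their sum is $O\left(\nicefrac{n}{B}\right)$, and each is a sequential (contiguous) access.

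The main obstacle is to argue that lines~6--10 require no further disk access, which is where I expect to spend most of the effort. The key observation is that the entire journey arriving at $v$ at time $t^+$ can be unfolded from the single in-memory array $in$ read once at line~5: $in$ holds, for \emph{every} vertex $w \in V$, both the latest departure time and the augmented successor pointer of a journey from $w$ to $v$ arriving at exactly $t^+$. By the successor-augmentation invariant described just before Algorithm~\ref{alg:2-disk}, the field $in[succ].succ$ names the first hop of such a journey, and the remainder of that journey is itself the one recorded at $in[\,in[succ].succ\,]$. Hence, starting from $succ = u$, each iteration only dereferences $in[succ]$ for a vertex $succ \in V$, which is always already present in $in$; no iteration re-reads $M_{out}$ or $M_{in}$. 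I would also remark that the successor chain is strictly time-respecting toward $v$ and therefore terminates, but since no iteration touches disk the exact number of iterations is irrelevant to the page count.

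Combining the three parts, the total is $O(1) + O\left(\nicefrac{n}{B}\right) + 0 = O\left(\nicefrac{n}{B}\right)$ sequential pages, as claimed. The only genuinely non-routine step is the disk-freeness of the loop, which hinges on the fact that one adjacency row of $M_{in}$ already encodes the successor information for all intermediate vertices of the reconstructed journey.
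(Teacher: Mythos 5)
Your proof is correct and follows essentially the same route as the paper's: one page for the \textproc{read\_cell} in line~2, $O\left(\nicefrac{n}{B}\right)$ sequential pages for the \textproc{read\_adjacency} in line~5, and the observation that the reconstruction loop only dereferences the in-memory array $in$. Your added elaboration on why the successor chain never needs to re-read $M_{in}$ is a more careful justification of what the paper states in one line (``using information already in memory''), but it is the same argument, not a different approach.
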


\begin{proof}
  The algorithm accesses one page by calling $\Call{read\mund{}cell}{M_{out}, u, t_1, v}$ in line~2.
  After that, it is known whether a journey exists or not.
  If a journey exists, it sequentially accesses $\nicefrac{n}{B}$ pages by calling $\Call{read\mund{}adjacency}{M_{in}, v, t^+}$ in line~5.
  The result of this call has all information needed to reconstruct a valid journey.
  Finally, in the loop from line~6 to line~8, the algorithm extends the resulting journey by one contact at each iteration using information already in memory.
  Thus, the number of pages accessed is dominated by the call $\Call{read\mund{}adjacency}{M_{in}, v, t^+}$.
\end{proof}

\section{Experiments}\label{sec:disk-experiments}

In this section, we will present experiments comparing our novel data structure based on sequential arrays with the approach we adapted based on~\cite{paper1} using \Btrees{} as a replacement for self-balanced binary search trees (BSTs).
Briefly, the approach introduced in~\cite{paper1} stores, in a matrix $n \times n$, pointers to BSTs containing time intervals.
In each BST, only non-redundant intervals are kept, \textit{i.e.}, those that do not contain another interval in the same tree.
The authors proposed to use \emph{join-based} operations in order to remove sequences of non-redundant intervals in $\log{\tau}$ time.
These operations can be found in~\ref{btreejoinsplit} for \Btrees{}.

In the following, we present two experiments in Sections~\ref{subsec:experiment1} and~\ref{subsec:experiment2}.
In the first one, we inserted unsorted contacts from complete temporal graphs, incrementally, in both data structures using the operation \alga{}.
In the second one, we inserted shuffled contacts from real-world datasets.

\subsection{Experiments with Synthetic Data}\label{subsec:experiment1}

In this first experiment, we generated complete temporal graphs with the number of vertices fixed to $100$ and varied the number of timestamps $\tau$ from $10$ to $10000$.
Then, we inserted their shuffled contacts in both data structures using the \alga{} operation.
The time to preallocate and initialize the arrays on disk for our array-based data structure was not considered in the total time.
We note that this extra cost can be high for large parameters; therefore, one should consider it whenever applicable.

\begin{figure}
  \centering
  \includegraphics[page=1, width=\textwidth]{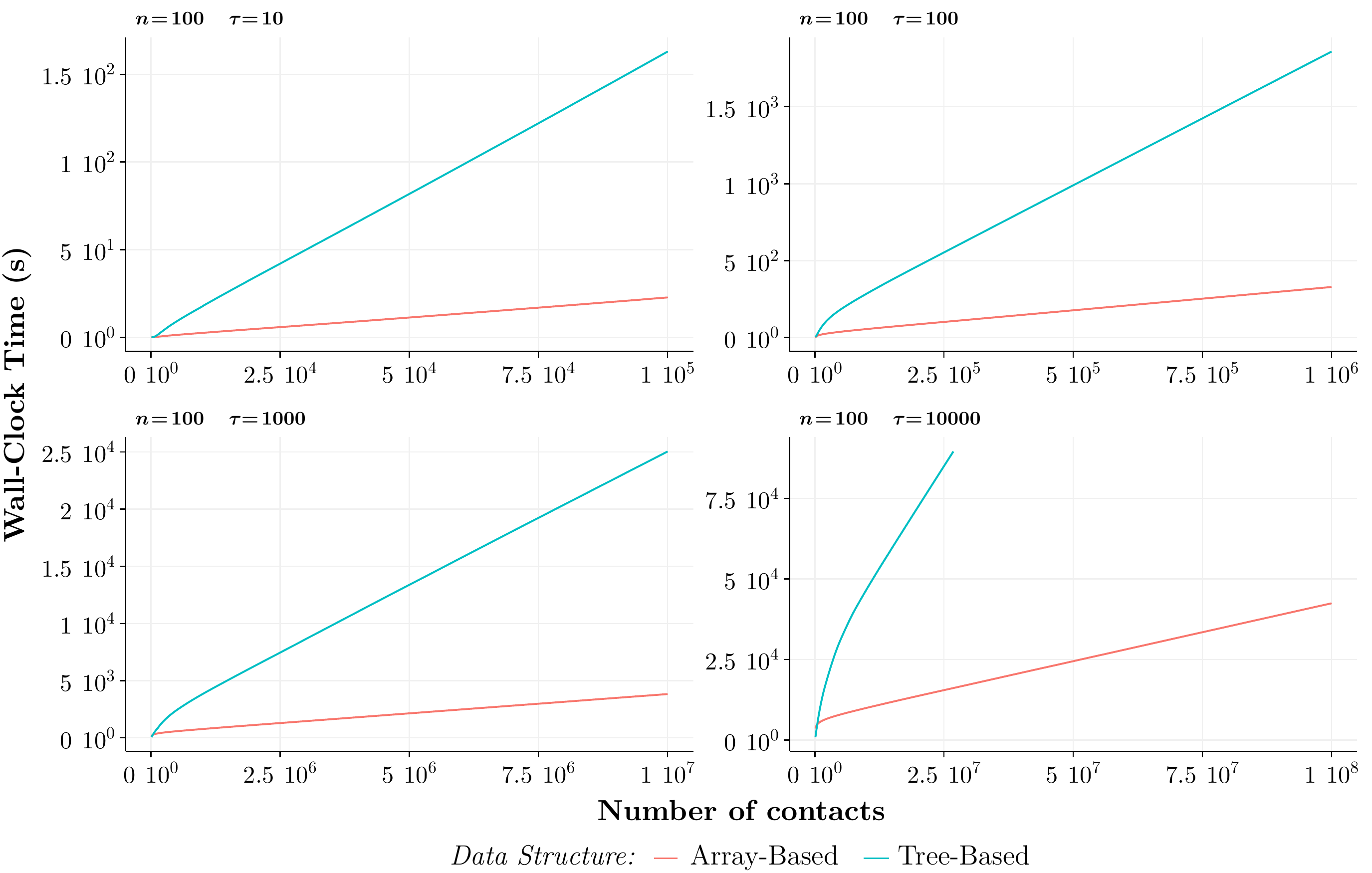}
  \caption{Cumulative wall-clock time to maintain data structures for reachability queries on synthetic data. We inserted shuffled contacts from complete temporal graphs into the data structures varying the number of timestamps $\tau$ while fixing the number of vertices to $100$. Red lines represent our novel data structure based on sequential arrays. Blue lines represent our adaptation of the approach introduced in~\cite{paper1} using \Btrees{} as self-balanced BSTs.}\label{fig:result1}
\end{figure}

Figure~\ref{fig:result1} shows the mean cumulative wall-clock time, averaged over $10$ executions, to maintain both data structures as new unsorted contacts were inserted.
We see that our novel data structure performs better for all configurations.
Even though the worst-case complexity of our algorithm for the \alga{} operation is linear in $\tau$ instead of logarithmic, it runs much faster using synthetic data.
We attribute this behavior to the fact that as new contacts are inserted, the probability of composing better \Rtuples, \textit{i.e.} journeys, decreases rapidly and, thus, our data structure updates on average only few cells per contact insertion.

Next, we will argue why the run time of our algorithm reduces with the addition of contacts.
Each pair of vertices $(u, v)$ are associated to a set $\mathcal{I}$ containing intervals $[t^-, t^+] \subseteq [1, \tau]$ in which $u$ can reach $v$ departing at $t^-$ and arriving at $t^+$.
For a particular pair of vertices, when an algorithm inserts a new interval $I$, all intervals $I'$ such that $I \subseteq I'$ can be safely removed since they become redundant.
Our data structure organizes these intervals in the arrays $M_{out}$ and $M_{in}$, which fix, respectively, the left and right endpoints, and our update algorithm discards redundant intervals by updating their cells accordingly by using Definition~\ref{def:redundancy}.

Consider the hierarchy of intervals illustrated in Figure~\ref{fig:possibility-tree}(a) for $\tau = 4$.
Each interval with length $l$ is linked to the intervals with length $l-1$ that it totally encloses.
For example, interval $[0, 5]$, with length $5$, links to intervals $[0, 4]$ and $[1, 5]$, with length $4$, because $[0, 4] \subseteq [0, 5]$ and $[1, 5] \subseteq [0, 5]$.
Initially, all intervals are available for insertion in our data structure.
When a new interval $[1, 2]$ is inserted, as show in Figure~\ref{fig:possibility-tree}(b), all intervals that contain it, including itself, are not available for insertion anymore.

Our update algorithm conceptually removes these intervals by drawing left and right frontiers separating available and non-available intervals starting from $[1, 2]$.
For instance, intervals $[1, 2]$ and $[0, 2]$, which belong to the left frontier, are updated in $M_{in}$ since they share the same right endpoint, and intervals $[1, 2]$, $[1, 3]$, $[1, 4]$ and $[1, 5]$, which belong to the right frontier, are updated in $M_{out}$ since they share the same left endpoint.
In this proccess, up to $\tau$ cells are updated in both $M_{in}$ and $M_{out}$.

Next, when a new interval $[3, 5]$ is inserted, as shown in Figure~\ref{fig:possibility-tree}(c), our algorithm must, again, draw the left and right frontiers starting from $[3, 5]$; however it does not need to advance previously drawn frontiers.
In this case, only intervals $[3, 5]$ and $[2, 5]$, which belong to the left frontier, are updated in $M_{in}$.
In Figure~\ref{fig:possibility-tree}(d), interval $[2, 3]$ is inserted and the same proccess repeats.
We see that as new intervals are inserted, the number of available intervals rapidly reduces.
Thus, even though our algorithm has complexity $O\left(\nicefrac{n^2\tau}{B}\right)$, it can run much faster when considering a sequence of contact insertions since the number of cells to be updated reduces over time.

Moveover, it is guaranteed that, for each new contact $(u, v, t)$, our algorithm will make unavailable for insertion every interval that is still available inside and at the frontiers starting from $[t, t + \delta]$ in the lowest level of the hierarchy associated with $(u, v)$.

\begin{figure}
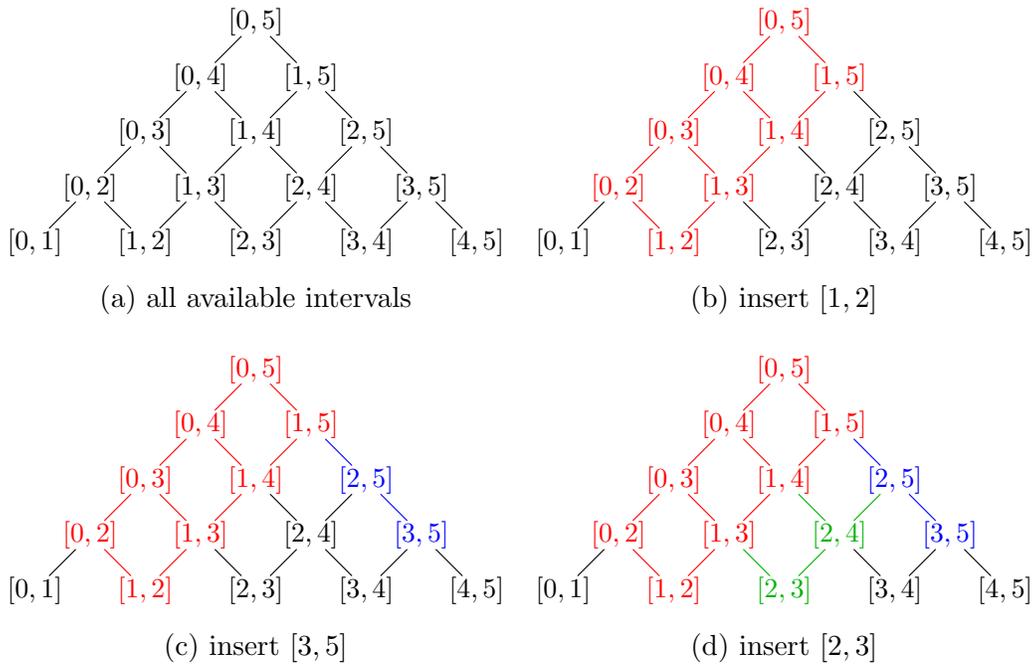

  \centering
  \begin{tabular}{cc}
    \includegraphics[page=5, width=0.4\textwidth]{tikz} &
    \includegraphics[page=6, width=0.4\textwidth]{tikz} \\ [2mm]
    (a) all available intervals & (b) insert $[1, 2]$ \\ [5mm]
    \includegraphics[page=7, width=0.4\textwidth]{tikz} &
    \includegraphics[page=8, width=0.4\textwidth]{tikz} \\ [2mm]
    (c) insert $[3,5]$  & (d) insert $[2,3]$ \\
  \end{tabular}
  \caption{Illustration of the proccess performed by our update algorithm considering a fixed pair of vertices $(u, v)$ from a temporal graph with $\tau = 4$. Available intervals for insertion are colored in black, and invalidated intervals, \textit{i.e.} intervals that should not be considered anymore by our update algorithm, are colored in different colors.
  Links represent the direct containement relation between intervals with length $l$ and intervals with length $l-1$.}\label{fig:possibility-tree}
\end{figure}

\subsection{Experiments with Real-World Datasets}\label{subsec:experiment2}

In this second experiment, we downloaded small and medium real-world available on \url{https://networkrepository.com/dynamic.php}, and preprocessed them using our script available on \url{https://bitbucket.com/luizufu/temporalgraph-datasets-preprocessing}.
During the preprocessing, we rellabelled the vertices and shifted the timestamps of each dataset so that vertex identifiers were beetween $[1, n]$ and timestamp values start from $1$.
Then, we inserted the shuffled contacts of each dataset in both data structures using the \alga{} operation.
We assumed that all used datasets represent temporal digraphs, and we used $\delta = 1$, \textit{i.e.}, traversing any contact takes one time unit.


\begin{table}[ht]
\centering
\begin{tabular}{lrrrrll}
  \toprule
  dataset & $n$ & $\tau$ & contacts & density & \textbf{Array-Based} & \textbf{Tree-Based} \\
  \midrule
  aves-sparrow &  $52$ &   $2$ & $516$ & $0.1$ & \textcolor{blue}{$0.01 \pm 0$} & $0.07 \pm 0$ \\
  aves-weaver & $445$ &  $23$ & $1423$ & $0.003$ & \textcolor{blue}{$0.19 \pm 0$} & $1.16 \pm 0.01$ \\
  aves-wildbird & $202$ &   $6$ & $11900$ & $0.05$ & \textcolor{blue}{$0.97 \pm 0.01$} & $9.52 \pm 0.15$ \\
  ant-colony1 & $113$ &  $41$ & $111578$ & $0.46$ & \textcolor{blue}{$25.84 \pm 0.19$} & $161.3 \pm 1.03$ \\
  ant-colony2 & $131$ &  $41$ & $139925$ & $0.2$ & \textcolor{blue}{$43.96 \pm 0.49$} & $261.98 \pm 1.96$ \\
  ant-colony3 & $160$ &  $41$ & $241280$ & $0.23$ & \textcolor{blue}{$89.37 \pm 0.73$} & $524.79 \pm 6.71$ \\
  ant-colony4 & $102$ &  $41$ & $81599$ & $0.19$ & \textcolor{blue}{$16.49 \pm 0.12$} & $104.62 \pm 1.27$ \\
  ant-colony5 & $152$ &  $41$ & $194317$ & $0.21$ & \textcolor{blue}{$166.93 \pm 3.63$} & $526.03 \pm 144.32$ \\
  ant-colony6 & $164$ &  $39$ & $247214$ & $0.24$ & \textcolor{blue}{$88.99 \pm 0.93$} & $608.87 \pm 167.97$ \\
  copresence-LH10 &  $73$ & $259181$ & $150126$ & $0.0001$ & - & \textcolor{blue}{$61.5 \pm 0.53$} \\
  copresence-LyonSchool & $242$ & $117721$ & $6594492$ & $0.001$ & - & \textcolor{blue}{$14887.45 \pm 1576.49$} \\
  kilifi-within-households &  $54$ &  $59$ & $32426$ & $0.19$ & \textcolor{blue}{$0.09 \pm 0$} & $0.21 \pm 0$ \\
  mammalia-primate &  $25$ &  $19$ & $1340$ & $0.12$ & \textcolor{blue}{$0.09 \pm 0$} & $0.33 \pm 0.01$ \\
  mammalia-raccoon &  $24$ &  $52$ & $1997$ & $0.06$ & \textcolor{blue}{$0.21 \pm 0$} & $0.44 \pm 0.01$ \\
  mammalia-voles-bhp & $1686$ &  $63$ & $5324$ & $0.00003$ & \textcolor{blue}{$13.76 \pm 0.82$} & $19.22 \pm 0.24$ \\
  mammalia-voles-kcs & $1218$ &  $64$ & $4258$ & $0.00004$ & \textcolor{blue}{$7.92 \pm 0.27$} & $11.78 \pm 0.09$ \\
  mammalia-voles-plj & $1263$ &  $64$ & $3863$ & $0.00003$ & \textcolor{blue}{$6.18 \pm 0.23$} & $10.68 \pm 0.04$ \\
  mammalia-voles-rob & $1480$ &  $63$ & $4569$ & $0.00003$ & \textcolor{blue}{$10.28 \pm 0.41$} & $15.09 \pm 0.12$ \\
  tortoise-bsv & $136$ &   $4$ & $554$ & $0.008$ & \textcolor{blue}{$0.01 \pm 0$} & $0.14 \pm 0.01$ \\
  tortoise-cs &  $73$ &  $10$ & $258$ & $0.005$ & \textcolor{blue}{$0.01 \pm 0$} & $0.05 \pm 0$ \\
  tortoise-fi & $787$ &   $9$ & $1713$ & $0.0003$ & \textcolor{blue}{$0.15 \pm 0$} & $2.71 \pm 0.01$ \\
  trophallaxis-colony1 &  $41$ &   $8$ & $308$ & $0.02$ & \textcolor{blue}{$0.02 \pm 0$} & $0.06 \pm 0$ \\
  trophallaxis-colony2 &  $39$ &   $8$ & $330$ & $0.03$ & \textcolor{blue}{$0.02 \pm 0$} & $0.05 \pm 0$ \\
   \bottomrule
\end{tabular}
  \caption{Total wall-clock time in seconds to insert all shuffled contacts from real-world datasets with number of vertices $n$, number of timestamps $\tau$, number of contacts into data structures for reachability queries, and the density of the temporal graph represented by the dataset.
  Values were rounded to two decimal places.
  Array-based refers to our novel data structure and tree-based refers to our implementation of the approach introduced in~\cite{paper1} using \Btrees{} as BSTs replacement.
  Executions that reached the time limit of $5$ hours are marked with the symbol ``-''.
  }\label{tab:result2}
\end{table}

Table~\ref{fig:result1} shows the mean wall-clock time, averaged over $10$ executions, to insert all shuffled contacts of each dataset into both data structures.
We see that our novel data structure performs better on the majority of datasets.
However for the largest datasets, \texttt{copresence-LH10} and \texttt{copresence-LyonSchool},  the tree-based data structure performed better.
Both datasets have high values for $\tau$ and low density.
It means that, as density is too small, each insertion of a contact $(u, v, t)$ may trigger an initial update over arrays $M_{out}$ and $M_{in}$ that will touch many cells on disk.
As in Figure~\ref{fig:possibility-tree}(b), for most insertions, our update algorithm will draw left and right frontiers on the almost empty hierarchy associated with the pair of vertices $(u, v)$ starting from interval $[t, t + \delta]$.
Therefore, in this case, the linear factor on $\tau$ from the cost $O(\nicefrac{n^2\tau}{B})$ of our update algorithm will have a bigger impact on the run time since the sequence of insertions is not sufficiently long for our algorithm to benefit from later insertions.

\section{Concluding remarks}\label{sec:disk-conclusions}

We presented in this paper an incremental disk-based data structure to solve the dynamic connectivity problem in temporal graphs.
Our data structure prioritizes query time, answering reachability queries by accessing only one page.
Based on the ability to quickly retrieve reachability information among vertices inside time intervals, it can:
insert contacts in a non-chronological order accessing $O\left(\nicefrac{n^2\tau}{B}\right)$ pages, where $B$ is the size of disk pages;
check whether a temporal graph is connected within a time interval accessing $O\left(\nicefrac{n^2}{B}\right)$ pages,
and reconstruct journeys accessing $O\left(\nicefrac{n}{B}\right)$ pages.
Our algorithms exploit the special features of non-redundant (minimal) reachability information, which we represent explicitly through the concept of expanded \Rtuples.
As in~\cite{paper1}, the core of our data structure, is essentially a collection of non-redundant \Rtuples, whose size (and that of the data structure itself) cannot exceed $O\left(n^2\tau\right)$.
However, in our approach, all this space must be preallocated on disk.
The benefit of our data structure is that algorithms explicitly manage data sequentially and, therefore, it is more suitable for secondary memories in which random accesses are expensive.

Further investigations could be done in the direction of improving the complexity of our update algorithm.
Can \alga{} access less than $O\left(\nicefrac{n^2\tau}{B}\right)$ pages?
Another direction could be designing efficient disk-based data structures for the decremental and the fully-dynamic versions of this problem.
With \emph{unsorted} contact insertion and deletion, it seems to represent both a significant challenge and a natural extension of the present work, one that would certainly develop further our common understanding of temporal reachability.
Finally, it could be worth to investigate compressing algorithms to reduce the space of our data structure and the number of pages accessed by our update algorithm.
Specifically, we think that compression algorithms based on differences and run-length coding~\cite{compressionalgs} could achieve a very high compression rate since the arrays $M_{out}$ and $M_{in}$ store repeating ordered values.
The compressing schema could also solve the preallocation and initialization problem since all cells of $M_{out}$ and $M_{in}$ have, initially, the same value, which are very compressible.

\begin{appendices}

\section{Join and split operations for \Btrees}\label{sec:disk-experiments-preliminaries}

Let each leaf node of \Btrees{} contains an array $K$ of keys of size $N$ and a pointer to its next sibling.
Let each non-leaf node contains an array of keys $K$  of size $M$ and an additional array of pointers $C$ to child nodes of size $M + 1$.
Additionally, assume that every node contains the height of the sub-tree it belongs, a pointer to its leftmost leaf child and a pointer to its rightmost leaf child.
We note that we use these additional per-node data to simplify our algorithms and discussions.
In a real implementation, only the root node (the tree itself) must maintain them during the insertion and update operations.
Information regarding the rest of the nodes can be computed during the execution of the next algorithms without increasing complexities.

\subsection{\texttt{Join} operation on \Btrees}\label{subsec:join}

Algorithm~\ref{alg:join} performs the operation \texttt{join} for \Btrees.
Given two \Btrees{} $T_{left}$ and $T_{right}$, such that keys present in $T_{left}$ are smaller to keys in $T_{right}$, it must merge both trees in order to create a new valid \Btree{} $T$ containing all keys present in $T_{left}$ and $T_{right}$.
As \Btrees{} place leaf nodes at the same level, it simply inserts or shares the data present in the root node of the smaller tree into the appropriate node at the same height in the bigger tree.
Then it maintains the \Btree{} invariances up to its root node of the changed bigger tree whenever necessary.
First, in line~11, the algorithm sets the next sibling of the rightmost leaf of $T_{left}$ to be the leftmost leaf of $T_{right}$.
Then, if $\texttt{height}(T_{left}) \geq \texttt{height}(T_{right})$, in lines~13 and~14, it adds $T_{right}$ to $T_{left}$ by calling $\Call{joinRight}{T_{left}, T_{right}}$ and returns $T_{left}$; otherwise, in lines~16 and~17, it adds $T_{left}$ to $T_{right}$ by calling $\Call{joinLeft}{T_{left}, T_{right}}$ and returns $T_{right}$.
From lines~1 to~11, we detail the \textproc{joinRight} routine, the \textproc{joinLeft} routine is implemented symmetrically.
In line~1, the algorithm descends $T_{left}$ until reaching the rightmost node $n_{left}$ at the same height of the root node of $T_{right}$.
If a single node of size $B$ can fit the content of both $n_{left}$ and the root node of $T_{right}$, in line~5, it simply merges both nodes by adding to $n_{left}$ the data present in the root node of $T_{right}$.
Otherwise, in line~7, it equally shares the data of both nodes, and, in line~8, it inserts into the parent of $n_{lelft}$ a new key together with a pointer to the root node of $T_{right}$.
If the parent node has no space left to accommodate the new data, a node splitting routine must be invoked and this process can continue up to the root node of $T_{left}$.
Finally, if the algorithm needs to split the current root node of $T_{left}$, then it creates a new root node, and, in this case, In line~10, it increments the height of $T_{left}$ by one.

\begin{algorithm*}
  \caption{\texttt{join}}\label{alg:join}
  \begin{algorithmic}[1]
    \Require{Two trees of intervals $T_{left}$ and $T_{right}$}
    \Procedure{joinRight}{$T_{left}, T_{right}$}
      \State{$n_{left} \gets \texttt{descend\mund{}right}(T_{left}, \texttt{height}(T_{left}) - \texttt{height}(T_{right}))$}
      \State{$n_{right} \gets \texttt{root}(T_{right})$}
      \If{$\texttt{size}(n_{left}) + \texttt{size}(n_{right}) \leq B$}
        \State{$n_{left} \gets \texttt{merge}(n_{left}, n_{right})$}
      \Else{}
        \State{$\texttt{share}(n_{left}, n_{right})$}
        \State{$\texttt{insert\mund{}rec}(\texttt{parent}(n_{left}), \texttt{min\mund{}key}(n_{right}), n_{right})$}
        \If{new root node was created}
          \State{$\texttt{height}(T_{left}) \gets \texttt{height}(T_{left}) + 1$}
        \EndIf{}
      \EndIf{}
    \EndProcedure{}
    \vspace{0.2cm}
    \State{$\texttt{next\mund{}leaf}(\texttt{rightmost\mund{}leaf}(T_{left})) \gets \texttt{leftmost\mund{}leaf}(T_{right})$}
    \If{$\texttt{height}(T_{left}) \geq \texttt{height}(T_{right})$}
      \State{$\Call{joinRight}{T_{left}, T_{right}}$}
      \State{\textbf{return} $T_{left}$}
    \Else{}
      \State{$\Call{joinLeft}{T_{left}, T_{right}}$}
      \State{\textbf{return} $T_{right}$}
    \EndIf{}
  \end{algorithmic}
\end{algorithm*}

\begin{theorem}{Algorithm~\ref{alg:join} accesses $O(|\texttt{height}(T_{left}) - \texttt{height}(T_{right})|)$ pages in the worst-case.}
\end{theorem}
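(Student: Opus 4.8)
The plan is to bound separately the three phases of the routine—the sibling-pointer fix, the downward descent, and the upward split propagation—and show that each costs $O(h)$ or $O(1)$ pages, where $h = |\texttt{height}(T_{left}) - \texttt{height}(T_{right})|$. Without loss of generality I would assume $\texttt{height}(T_{left}) \geq \texttt{height}(T_{right})$, so that line~12 dispatches to $\texttt{joinRight}$ and returns $T_{left}$; the other branch invokes $\texttt{joinLeft}$, which is implemented symmetrically, so the identical argument applies after exchanging the roles of the two trees.

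First I would dispose of line~11, the sibling-pointer update. It touches only the rightmost leaf of $T_{left}$ and the leftmost leaf of $T_{right}$. By the per-node bookkeeping assumed in the preliminaries (each node caches pointers to its leftmost and rightmost leaf children), these two leaves are reached in $O(1)$ page accesses rather than by full root-to-leaf traversals, so this phase is $O(1)$. Next, the descent in line~1, $\texttt{descend\mund{}right}(T_{left}, \texttt{height}(T_{left}) - \texttt{height}(T_{right}))$, walks one node per level starting from the root of $T_{left}$ and stopping at the node $n_{left}$ sitting at the same height as $\texttt{root}(T_{right})$. That is exactly $h$ levels, and since a single \Btree{} node occupies $O(1)$ pages, the descent accesses $O(h)$ pages.

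Then I would handle the two cases of the conditional. If the contents of $n_{left}$ and $n_{right}$ fit within one node of capacity $B$ (line~4), the $\texttt{merge}$ in line~5 rewrites only these two already-resident nodes, costing $O(1)$ pages. Otherwise, $\texttt{share}$ in line~7 again touches only those two nodes, and the recursive insertion in line~8, $\texttt{insert\mund{}rec}(\texttt{parent}(n_{left}), \ldots)$, may cascade node splits upward. The crucial point is that this cascade begins at $\texttt{parent}(n_{left})$—at height $\texttt{height}(T_{right}) + 1$—and can climb no higher than the root of $T_{left}$, at height $\texttt{height}(T_{left})$; hence it traverses at most $h-1$ levels, each split touching $O(1)$ nodes and therefore $O(1)$ pages. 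Even in the worst case where every parent on the path to the root splits and a new root is created (line~10, which only updates the cached height in $O(1)$), the propagation is bounded by $O(h)$ pages. Summing the three phases gives $O(1) + O(h) + O(h) = O(h)$, as claimed.

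I expect the split-propagation bound to be the main obstacle: one must argue precisely that $\texttt{insert\mund{}rec}$ cannot escape above the old root of $T_{left}$ (so the number of affected levels is governed by the height gap, not by $\texttt{height}(T_{left})$ itself), and that revisiting nodes already loaded during the descent merely doubles the constant and does not change the asymptotics. A secondary subtlety worth stating explicitly is the reliance on the cached leftmost/rightmost-leaf pointers from the preliminaries to keep line~11 at $O(1)$; without that assumption, locating the two leaves could cost $\Theta(\texttt{height}(T_{left}))$ and break the bound for the case $h = 0$.
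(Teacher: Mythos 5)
Your proof is correct and follows essentially the same route as the paper's: the same case split on heights (WLOG descending into \texttt{joinRight}), the same phase decomposition (the $O(1)$ sibling-pointer fix at line~11, the $O(h)$-page descent at line~2, the $O(1)$ merge/share, and the upward insertion cascade bounded by the height gap, plus $O(1)$ for a possible new root). Your added remarks---that the cascade cannot climb above the old root of $T_{left}$ and that line~11 relies on the cached leftmost/rightmost-leaf pointers---are just more explicit statements of what the paper's proof implicitly uses.
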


\begin{proof}
  In line~11, the algorithm accesses one page to set the next child node of the rightmost leaf node of $T_{left}$.
  Then, it calls \textproc{joinRight} or \textproc{joinLeft} depending on the heights of $T_{left}$ and $T_{right}$, both accessing the same amount of pages.
  Without loss of generality, assume that $\texttt{height}(T_{left}) \geq \texttt{height}(T_{right})$ and it calls thus the \textproc{joinRight} procedure.
  Then, at line~2, the algorithm accesses $\texttt{height}(T_{left}) - \texttt{height}(T_{right})$ pages while descending to the rightmost node of $T_{left}$ at height $\texttt{height}(T_{right})$.
  Next, if there is enough room to fit the data of both nodes being merged in a single node, it accesses $O(1)$ pages and the algorithm ends.
  Otherwise, it accesses $O(1)$ pages to share the content present in the considered nodes.
  Then, it accesses, again, $O(\texttt{height}(T_{left}) - \texttt{height}(T_{right}))$ pages in order to insert new key and pointer pairs up to the root of $T_{left}$ in the worst-case.
  Finally, if a new node is created, it access one more page to increment the height of $T_{left}$ and the algorithm ends.
  \qed
\end{proof}

\subsection{\texttt{Split} operation on \Btrees}\label{subsec:split}

Algorithm~\ref{alg:split} performs the operation \texttt{split} for \Btrees.
Given an interval key $L$, it must split a tree $T$ in two trees $T_{left}$ and $T_{right}$ such that all keys in $T_{left}$ are smaller than $I$ and all keys in $T_{right}$ are greater or equal to $I$.
To accomplish this task, it recursively descends $T$ from the root node to the leaf node containing the biggest key less than $L$ while partitioning nodes appropriately and, during the backward phase of the recursion, progressively building $T_{left}$ and $T_{right}$.
During each recursive step, in line~1, the algorithm first finds the position $k$ in the current root node such that $K[k] \geq I$, where $C[k]$ is the pointer that branches to the next child node for non-leaf nodes.
If the current root node is a leaf, in line~3, it partitions the current node in two sub-trees: $T_{left}$, containing a node with $K[1 \ldots k - 1]$; and $T_{right}$, containing a node with $K[k \ldots N]$.
Then, in line~4, it sets the next sibling of $T_{left}$'s root to $nil$; in line~5, it sets the next sibling of $T_{right}$ to the next sibling of $T$'s root; and, in line~6, it returns $(T_{left}, T_{right})$.
Note that no other leaf node besides the affected ones must update the pointer to its next siblings since the resulting trees will reuse the previous linkages.
Next, if the current node is a non-leaf, in line~7, the algorithm partitions the current node in three sub-trees: $T_{left}$, containing a root node with $K[1 \ldots k - 2]$ and $C[1 \ldots k - 1]$; $T_{child}$, containing a root node with $K[k - 1 \ldots k]$ and $C[k]$; and (3) $T_{right}$, containing a root node with $K[k + 1 \ldots M]$ and $C[k + 1 \ldots M + 1]$.
Additionally, whenever a sub-tree have only one pointer in its root node, its respective root node becomes the child pointed by it in order to maintain the correct \Btree{} layout.
Then, in line~8, it calls the \textproc{split} algorithm itself passing $T_{child}$ as parameter and obtaining two sub-trees $T'_{left}$ and $T'_{right}$ as the intermediate result.
Finally, in line~9, it advances the intermediate result by joining them appropriately with the sub-trees of the current level.

\begin{algorithm*}
  \caption{\texttt{split}}\label{alg:split}
  \begin{algorithmic}[1]
    \Require{A tree of intervals $T$ and a key interval L}

    \State{$k \gets \texttt{find\mund{}key\mund{}position}(\texttt{root}(T), L)$}
    \If{$\texttt{root}(T)$ is a leaf}
      \State{$(T_{left}, T_{right}) \gets \texttt{split\mund{}leaf}(root(T), k)$}
      \State{$\texttt{next\mund{}leaf}(\texttt{root}(T_{left})) \gets nil$}
      \State{$\texttt{next\mund{}leaf}(\texttt{root}(T_{right})) \gets \texttt{next\mund{}leaf}(\texttt{root}(T))$}
      \State{\textbf{return} $(T_{left}, T_{right})$}
    \EndIf{}
    \State{$(T_{left}, T_{child}, T_{right}) \gets \texttt{split\mund{}non\mund{}leaf}(root(T), k)$}
    \State{$(T'_{left}, T'_{right}) \gets \texttt{split}(T_{child}, L)$}
    \State{\textbf{return} $(\texttt{join}(T_{left}, T'_{left}), \texttt{join}(T'_{right}, T_{right}))$}
  \end{algorithmic}
\end{algorithm*}

\begin{theorem}{Algorithm~\ref{alg:split} accesses $O(\log_B{(\tau)})$ pages in the worst-case where $\tau$ is the maximum number of keys in the tree.}
\end{theorem}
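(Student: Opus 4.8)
The plan is to bound the number of pages accessed by analyzing the recursion tree of \textproc{split} and charging the cost of the \texttt{join} calls against the height differences they encounter. First I would establish that the recursion in Algorithm~\ref{alg:split} descends along a single root-to-leaf path: at each recursive step the algorithm partitions the current root node, recurses only on $T_{child}$, and then returns after two \texttt{join} calls. Since a valid \Btree{} storing at most $\tau$ keys has height $O(\log_B \tau)$, the recursion depth is $O(\log_B \tau)$, and at each level the non-recursive work (finding the key position via \texttt{find\mund{}key\mund{}position} and partitioning the node with \texttt{split\mund{}leaf} or \texttt{split\mund{}non\mund{}leaf}) touches $O(1)$ pages.

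The main obstacle is the cost of the two \texttt{join} calls at line~9, since by the previously proved theorem each \texttt{join} accesses $O(|\texttt{height}(T_{left}) - \texttt{height}(T_{right})|)$ pages, and a na\"{i}ve bound of $O(\log_B \tau)$ per join summed over $O(\log_B \tau)$ levels would yield a loose $O(\log_B^2 \tau)$ bound. The key step is therefore to argue that the height differences telescope. At the recursion level corresponding to height $h$ in the original tree, the fragments $T'_{left}$ and $T'_{right}$ returned from the deeper call have heights that differ from the local fragments $T_{left}$ and $T_{right}$ by an amount proportional to the number of levels the deeper recursion climbed back up through nodes of a given height. I would make this precise by observing that the left fragments produced along the descent form a chain of strictly increasing heights, so the total join cost on the left side is a telescoping sum bounded by the overall tree height; the same holds symmetrically for the right fragments.

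Concretely, I would let $h_i$ denote the height of $T_{left}$ (respectively $T_{right}$) accumulated after processing level $i$ of the recursion, note that $h_i$ is non-decreasing in the backward phase, and that the join at level $i$ costs $O(h_i - h_{i-1} + 1)$ pages; summing over all levels gives $\sum_i O(h_i - h_{i-1} + 1) = O(h_{\max}) + O(\log_B \tau) = O(\log_B \tau)$, since the heights are bounded by $O(\log_B \tau)$ and there are $O(\log_B \tau)$ levels. Adding the $O(1)$ per-level partitioning cost, which contributes a further $O(\log_B \tau)$, yields the claimed total of $O(\log_B \tau)$ pages accessed.

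Finally, I would note that the base case (when $\texttt{root}(T)$ is a leaf) accesses $O(1)$ pages to split the leaf and rewire the two sibling pointers in lines~4 and~5, and that no other leaf node needs its sibling pointer updated because the resulting trees reuse the existing linkages, so this contributes nothing beyond the $O(\log_B \tau)$ already accounted for. The forward-looking expectation is that formalizing the telescoping argument for the join heights is the delicate part; once that is in place, the remaining bookkeeping is routine.
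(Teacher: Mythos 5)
Your proposal is correct and follows essentially the same route as the paper's proof: the paper also charges the two \texttt{join} calls per level via an amortized ``payment'' argument in which an empty fragment costs nothing (the height gap grows by one) and a non-empty fragment pays the accumulated gap (which then resets), so the total join cost telescopes to the tree height, exactly as in your sum $\sum_i O(h_i - h_{i-1} + 1) = O(\log_B \tau)$. The remaining bookkeeping (the $O(1)$ cost per level for partitioning and for rewiring the two leaf sibling pointers) matches the paper's items as well.
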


\begin{proof}
  During each recursive step, Algorithm~\ref{alg:split} needs to: (1) partition the current node being considered in at least two sub-trees; (2) change pointers to next leaf siblings, whether the current node is a leaf; and (3) join the current sub-trees with the sub-trees resulting from the next recursive step.
  For (1), the algorithm accesses $O(1)$ pages.
  In the worst-case scenario, if the root node of a sub-tree has only a single child, the algorithm reads this child in order to make it the new root node.
  For (2), the algorithm also accesses $O(1)$ pages since only two leaf nodes are updated.
  For (3), the algorithm calls the \texttt{join} algorithm twice.
  Without loss of generality, consider only calls maintaining $T_{left}$.
  From the node above the leaf node containing the split key up to the root of $T$, the algorithm joins the current left sub-tree $T_{left}$ with the intermediate left sub-tree $T'_{left}$ resulting from the previous iteration.
  At each iteration there is a $\texttt{join}(T_{left}, T'_{left})$ call in which $T_{left}$ is either empty or non-empty.
  In case $T_{left}$ is empty, the algorithm pays nothing and, at the next iteration, the difference in height between $T_{left}$ and $T'_{left}$ increases by one.
  In case $T_{left}$ is non-empty, the algorithm pays the difference in height accumulated so far and, at the next iteration, the difference in height resets to one.
  Therefore, as the summation of all payments is at most the height of the tree, the algorithm accesses $O(\log_{B}(\tau))$ pages while processing all \texttt{join} calls.
  \qed
\end{proof}

\end{appendices}

\begin{paragraph}{Acknowledgements}
  This study was financed in part by Funda\c{c}\~{a}o de Amparo \`{a} Pesquisa do Estado de Minas Gerais (FAPEMIG) and the Coordena\c{c}\~{a}o de Aperfei\c{c}oamento de Pessoal de N\'{i}vel Superior - Brasil (CAPES) - Finance Code 001* - under the ``CAPES PrInt program'' awarded to the Computer Science Post-graduate Program of the Federal University of Uberl\^{a}ndia.
\end{paragraph}

\printbibliography{}

\end{document}